\newcommand{\T}{T}
\newcommand{\eps}{\epsilon}
\newcommand{\E}{\mathbb{E}}
\newcommand{\R}{\mathbb{R}}
\newcommand{\Bc}{\mathcal{B}}
\newcommand{\Dc}{\mathcal{D}}
\newcommand{\Mc}{\mathcal{M}}
\newcommand{\Sc}{\mathcal{S}}
\newcommand{\diag}{\mathrm{diag}}
\newtheorem{assumption}{Assumption}
\newtheorem{definition}{Definition}
\newtheorem{proposition}{Proposition}
\newtheorem{lemma}{Lemma}
\newtheorem{theorem}{Theorem}
\title{\LARGE \bf
Statistical Estimation with Strategic Data Sources in Competitive Settings
}
\author{Tyler Westenbroek, Roy Dong, Lillian J. Ratliff, and S. Shankar Sastry
\thanks{T. Westenbroek, R. Dong, and S. S. Sastry are with the Department of Electrical Engineering and Computer Sciences, University of California, Berkeley, Berkeley, CA, 94707, USA, {\tt\footnotesize $\{$westenbroekt,roydong,sastry$\}$@eecs.berkeley.edu}. L. J. Ratliff is with the Department of Electrical Engineering, University of Washington, Seattle, WA, 98195, USA, {\tt\footnotesize ratliffl@uw.edu}. }%
}
\begin{document}

\maketitle
\thispagestyle{empty}
\pagestyle{empty}


\begin{abstract}

In this paper, we introduce a preliminary model for interactions in the data market. Recent research has shown ways in which a data aggregator can design mechanisms for users to ensure the quality of data, even in situations where the users are effort-averse (i.e. prefer to submit lower-quality estimates) and the data aggregator cannot observe the effort exerted by the users (i.e. the contract suffers from the principal-agent problem). However, we have shown that these mechanisms often break down in more realistic models, where multiple data aggregators are in competition. Under minor assumptions on the properties of the statistical estimators in use by data aggregators, we show that there is either no Nash equilibrium, or there is an infinite number of Nash equilibrium. In the latter case, there is a fundamental ambiguity in who bears the burden of incentivizing different data sources. We are also able to calculate the price of anarchy, which measures how much social welfare is lost between the Nash equilibrium and the social optimum, i.e. between non-cooperative strategic play and cooperation.

\end{abstract}

\section{Introduction}
\label{sec:intro}

The proliferation of smart sensors in recent years has introduced the possibility of accurately detecting and estimating a large new class of phenomena that affect society. These sensors, ranging from smart personal devices to more traditional purpose-built sensors, may be owned by a multitude of sources, and can produce qualitatively different readings which can be combined to make inferences about an event of interest.

In turn, this has led to the advent of crowd sensing, wherein a central data collector accrues the measurements made by a multitude sources, using these data points to generate a single cohesive estimate for some phenomena of interest to the data collector. 
However, the quality of this central estimate, and thus its value to the data collector, depends fundamentally on the ability, and moreover the willingness, of the data sources to produce accurate readings which are relevant to the phenomena the data collector wishes to study. 

Unfortunately, there may be instances where data sources have some aversion to
providing the data collector with the quality of estimates she desires. Take as
an example, the case where the sensor must exert significant resources to
produce an accurate reading (e.g. time or network bandwidth), or a situation
where the source views the information she is sharing as private, and has
incentive to obfuscate the data she shares~\cite{Bakken2004,Dwork2014}. Consequently,  in order to ensure she consistently receives high quality measurements from the data sources, the central data collector must design an incentive mechanism which:

\begin{enumerate}
\item
allows her to metricize the quality of the reading each data source provides, and
\item
provides incentive for the data sources to produce readings which are considered "high quality" under this metric.
\end{enumerate}

Given the wide range of applications and industries this problem affects, many different compensation mechanisms have been proposed to promote the production of high quality readings from a collection of data sources. An overview of such mechanisms is given in~\cite{gao2015survey}. 

The contribution of this section can be seen as an extension of \cite{cai:2015aa},
in which the authors design a general payment mechanism, by which a central data
collector may induce each data source in the marketplace to exert precisely the
level of effort in collecting data that the central data buyer desires. The goal
of the data buyer in this case is to obtain a high quality estimator for some
phenomena using the readings from the data sources, while reducing the payments
needed to incentivize the necessary exertion of effort from the sensors. Several
other papers~\cite{dobakhshari2016incentive,farokhi2015budget} further investigate mechanisms of this sort, proposing several extensions. 

However, it has yet to be studied how such mechanisms perform in situations where more than one central data buyer wishes to purchase readings from data sources in the marketplace. A number of important questions arise when such \emph{data markets} are considered. If the central data buyers are competing companies, will they permit data sources to also sell information to their competitors? If the data buyers do purchase readings from the same set of data sources, who will foot the bill to incentivize the effort the data sources exert? Will the data buyers who provide larger payments to the data sources be compensated with higher quality readings than their competitors?

Most significantly, this section demonstrates that if all the data buyers design compensation schemes as proposed in \cite{cai:2015aa}, each of the data buyers will receive the same quality of reading from a particular data source, regardless of how much each data buyer personally compensates the data source for her effort. This leads to conflicting objectives for each of the data buyers on several fronts. If a data buyer wishes to induce a data source to exert a high level of effort, she must reconcile the fact that her competitors will also receive a high quality reading from this data source. Even in the case where the data buyers care little about the success of the other buyers in the marketplace, each data buyer still wants to incentivize the data sources to produce high quality readings, but wants to force the other data buyers to offer the lion's share of the necessary compensation. 

In this section, we analyze the competitive outcomes that arise in such a marketplace by formulating a game between the buyers wherein they

\begin{enumerate}
\item compete by designing pricing mechanisms to affect the behavior of the data
    sources, and
\item design these mechanisms so as meet the personal objectives enumerated
    above.
\end{enumerate}

We derive conditions for the existence of Nash Equilibria in this game when a particular form is assumed, and analyze the efficiency and equity of these outcomes.  We demonstrate through both analytical and numerical exercises that the outcomes of these games are often highly inefficient from a social standpoint, which motivates future work  to design  incentive mechanisms which more effectively handle competition between data buyers.

The rest of this paper proceeds as follows. In Section~\ref{sec:datam_formulation}, we lay out explicit mathematical structures for the data markets, strategic data sources, strategic data buyers, and the class of contracts we will consider between the sources and buyers. In Section~\ref{sec:datam_results}, we analyze the game that forms between the buyers in the data market, and demonstrate that the outcome of this game is in many cases socially inefficient, and often times. Section~\ref{sec:datam_examples} provides a numerical example which highlights the issues  presented in Section~\ref{sec:datam_results}. And finally, Section~\ref{sec:datam_discussion} prescribes  an agenda for future work, with the aim of developing more refined incentive mechanism which do  not suffer from the same shortcomings in the competitive setting.

\section{Mathematical formulation}
\label{sec:datam_formulation}

In this section we formulate our model for data markets. We first present our model for strategic data sources, and then strategic buyers who issue incentives to strategic data sources. Based on recent research~\cite{cai:2015aa}, we use incentives with a particular payment structure. Then, we define our overall game, as well as a generalized Nash equilibrium for this game.

\subsection{Data market}

At a high level, a \emph{data market} consists of a set $\mathcal{S} = \{1, ... , N\}$ of strategic data sources, and a set $\mathcal{B} = \{1, ... , M\}$ of strategic data buyers. Each data source $i$ is equipped to generate an estimate of the function $f : \mathcal{D} \to \R$ at some data point $x_i \in \mathcal{D}$, and each data buyer $j \in \mathcal{B}$ wishes to use these readings to generate a personal estimator of $f$, which we will denote $\hat{f}^j$. Each buyer $b_j$ is willing to form a contract with each data source $i \in \mathcal{S}$, which monetarily compensates $i$ for the readings she produces, and we assume it is under the purview of $j$ to define the structure of this contract.

One may think of $\mathcal{D}$ as a set of features or events the data buyers are capable of observing, in order to make a prediction about some phenomena. The value returned by the mapping $f$ encapsulates the relationship between the observable features and the outcome of interest. We further assume that each of the data sources and buyers acts \emph{strategically}; that is, each of these agents acts to maximize some expected personal return from her transactions in this marketplace. The following two subsections of the document provide an explicit mathematical formulation describing the behavior of the data sources and data buyers. The basis for these definitions comes directly from~\cite{cai:2015aa}.

\subsection{Strategic data sources}

In this subsection, we define our model for strategic data sources. Intuitively, data sources provide data samples $(x,y)$ whose variance depends on their effort. Thus, the more effort exerted, the better the statistical estimation for any data buyer who receives the data. Additionally, we assume the data sources are effort-averse, i.e. all else equal, they prefer to exert minimal effort. Furthermore, the buyer has no direct way to verify the amount of effort exerted by the data source. Thus, we have an issue commonly referred to as \emph{moral hazard}.

More formally, all data sources share some function $f : \Dc \rightarrow \R$, where $f$ is the function which data buyers wish to estimate. One may think of $\Dc$ as a set of features or events the data buyers are capable of observing, in order to make a prediction about some phenomena. The value returned by the mapping $f$ encapsulates the relationship between the observable features and the outcome of interest. 

Each data source $i$ has their own feature $x_i \in \Dc$ and their own cost-of-effort function $\sigma_i^2 : \R \rightarrow \R_+$. When data source $i$ exerts effort $e_i \in \R$, they produce an estimate of the form:
\[
y_i(e_i) = f(x_i) + \eps_i(e_i) \quad \eps_i(e_i) \sim N(0,\sigma_i^2(e_i))
\]
Both $x_i$ and $\sigma_i^2$ are common knowledge, but the effort $e_i$ is private, as well as the the value $y_i(e_i)$ produced. We shall design contracts such that the data source $i$ is incentivized to exert the `correct' amount of effort (to be defined), and report $y_i$ truthfully.

Data source $i$ will receive a payment from each buyer for their data. For buyer $j$, let this payment, potentially random, be denoted $p_i^j$. We assume that the data source has a utility function of the following form, should they opt-in:
\begin{equation}
\label{eq:effort_selection}
\E  \left( \sum_{j \in \Bc} p_i^j \right) - e_i 
\end{equation}
If they opt-out, they will receive utility 0. 

Note that this assumes that the data sources are risk-neutral, effort-averse, and must opt-in ex-ante. Additionally, we assume the effort $e_i$ can be normalized to be comparable to the payments.

Throughout the rest of this paper, we shall often omit the argument $e_i$ when context makes it evident. 

\subsection{Strategic data buyers}

A strategic data buyer $j \in \Bc$ is an agent who wishes to construct the best estimator $\hat{f}^j$ for a function $f$. She optimizes a loss function across a class of estimators, which the data buyer is free to select. In general, different buyers need not fit models of the same type; for example, one data buyer may choose to generate her estimator via linear regression, while another data buyer constructs his estimator by fitting the data to a polynomial model of higher degree. Differences in the type of estimator data buyers use may be used to encapsulate competitive advantages one data buyer has over another. For a more thorough review of the technical requirements of these estimators, see~\cite{cai:2015aa}.

Additionally, each data buyer $j$ has a distribution $F_j$ across $\Dc$, which denotes how much they value an accurate estimate at various points in $\Dc$.

In particular, let $\hat{f}^j_{(\vec{x}, \ \vec{y^j})}$ denote the estimator that buyer $j$ constructs, based on the location of the data sources, $\vec{x}$,  and the reports she receives from the data sources, $\vec{y^j}$. (Here, $\vec{x} = (x_1,\dots,x_N)$ and similarly $\vec{y^j}$ is the vector of $y$ values reported to buyer $j$.)

Beyond any intrinsic utility buyer $j$ experiences from increasing the quality of her estimator, $j$ also wishes to construct an estimator that is better than the estimator constructed by her competitors, the other members of $\mathcal{B}$.

Each data buyer $j$ commits to a payment function $p_i^j$ to each data source $i \in \mathcal{S}$, where $p_i^j: \mathcal{D}^N \times \R^N \to \R$ may depend not only on the reading reported by data source $i$, but also the readings reported by the other members of $\mathcal{S}$, with consideration given to the location of the data sources. In particular, buyer $j$ constructs her various contracts with the data sources so as to minimize:
\begin{eqnarray}
\label{eq:buyer_obj}
\begin{split}
J^j&(\vec{p^j}, \vec{p^{-j}})  =
\E \Bigg[ 
\left( \hat{f}^j_{\vec{x},\vec{y^j}}(x^*) - f(x^*) \right)^2 - \\
&
\sum_{k \in -j} \delta_k^j \left( \hat{f}^k_{\vec{x},\vec{y^k}}(x^*) - f(x^*) \right)^2 +
\eta^j \sum_{i \in \Sc} p_i^j(\vec{x},\vec{y^j})
\Bigg]
\end{split}
\end{eqnarray}
The expectation in \eqref{eq:buyer_obj} is taken across $x^* \sim F_j$ as well as the randomness in the reported data $\vec{y^k}$ for $k \in \Bc$. (Recall that $F_j$ weighs the importance data buyer $j$ places on an an accurate estimator about different points $x^* \in \mathcal{D}$.) 
Here, as per typical game theory notation, we will let $-i$ denote $\Sc \setminus \{i\}$ and $-j$ denote $\Bc \setminus \{j\}$, and when $-i$ or $-j$ is used as a subscript, this denotes everyone else's variables, e.g. $\vec{p^{-j}}$ denotes the vector of payment plans of all the data buyers that are not $j$.

Here, $\delta_k^j \in [0 ,1]$ parameterizes the level of competition between buyers $j$ and $k$, and we assume this competition is symmetric so $\delta_k^j = \delta_j^k$. When $\delta_k^j =0$, $j$ is indifferent to the success of $k$, and competes with $k$ only insofar as trying to determine who will pay to incentive the data sources. Meanwhile, $\delta^j_k  = 1$ denotes a situation akin to a zero-sum game between data buyers $j$ and $k$. 

The parameter $\eta^j > 0$ denotes a conversion between dollar amounts allocated by the payment functions and the utility generated by the quality of the various estimators that are constructed.

In order for the objective expressed in \eqref{eq:buyer_obj} to be well defined, we assume that buyer $j$ chooses to construct an estimator for which there exists a function $g_j$ such that, for all distributions $F^j$ over $\mathcal{D}$, $\vec{x}$, and $\vec{\sigma^2} \in \R^N$:
\begin{equation}
g_j(\vec{x}, F_j, \vec{\sigma^2}) = \E \left[ \left( \hat{f}^j_{\vec{x},\vec{y^j}}(x^*) - f(x^*) \right)^2 \right]
\end{equation} 
Here the $\vec{y^j}$ have variance $\vec{\sigma^2}$.

Finally, we assume that buyer $j$ has knowledge of what class of estimator each of
the other data buyers plans to use.\footnote{This is a heavy-handed assumption,
given that competing data buyers are unlikely to inform their competitors how
they intend to process the data supplied by the sources. However, this is
keeping with the goal of the paper, as we shall demonstrate that even when there
is complete information between the buyers, inefficiencies still arise in the
data market.} 

The data buyers are interested in offering payment contracts to data sources. These contracts must be designed such that, for each data source $i$, when $i$ selects her effort $e_i$ to maximize to \eqref{eq:effort_selection}, given the payment contracts from all of the other
data buyers:
\begin{equation}
\label{eq:total_IR}
\E \left[ \sum_{j \in \Bc} p_i^j(\vec{y^j}(e_i))\right] - e_i \geq 0
\end{equation}
\begin{equation}
\label{eq:single_IR}
\E \left[ p_i^j(\vec{y^j}(e_i)) \right] \geq 0
\end{equation}
Note that \eqref{eq:total_IR} is an ex-ante constraint for data source $i$ that $i$ receives non-negative payoff in expectation. This depends on the payments of the other data buyers. The second is an ex-ante constraint that data source $i$ never opts into any contract with negative payments.

We model the resulting competition between the data buyers, subject to these coupled constraints, as a generalized Nash equilibrium problem (GNEP)~\cite{dorsch:2013aa}.
\begin{definition}
\label{def:gnep}
Each player $j$ from a finite set of players $\Bc$ aims to solve an optimization problem given by:
\begin{equation}
\label{eq:gnep1}
BR(p^{-j}) = \arg\min_{p^j} \{ J^j ( p^j, p^{-j} ) \vert p^j \in \Mc^j(p^{-j}) \}
\end{equation}
$\Mc^j(p^{-j})$ is called the \emph{feasible set} for player $j$, which depends on the actions taken by the other players $-j$. A vector $p = (p^1, p^2, \dots p^M)$ is called a \emph{(generalized) Nash equilibrium (GNE)} if $p^j = BR(p^{-j})$ for all $j \in \Bc$, i.e. the $p^j$ are simultaneously solutions to each players optimization \eqref{eq:gnep1}.
\end{definition}

Having laid out the general formulation for this problem, in the final portion of this paper we lay out the form of the payment contracts that we consider between the buyers and sellers.

\subsection{Structure of payment contracts}

In~\cite{cai:2015aa} the particular case where $|\mathcal{B}| = 1$ is analyzed, and no competition between buyers of data must be considered. Their work considers payment plans from the single buyer to each data source $i$ of the form:
\begin{equation}
p_i(\vec{x} , \vec{y}) = c_i - d_i \left( y_i - \hat{f}_{(\vec{x}, \vec{y^j})_{-i}}(x_i) \right)^2,
\end{equation}
where $\hat{f}_{(\vec{x}, \vec{y})_{-i}}(x_i)$, is the optimal estimate for $f(x_i)$ that the data buyer can construct from the readings reported by the data sources other than source $i$, and $c_i \geq 0, d_i \geq 0$ are scalars to be chosen strategically by the buyer. The authors of~\cite{cai:2015aa} demonstrate an algorithm for selecting $c_i$ and $d_i$ which allows the buyer to:
\begin{enumerate}
\item precisely incentive data source $i$ to exert any level of effort $\bar e_i$ that the buyer desires (the authors can make $\bar e_i$ a dominant strategy for data source $i$), and 
\item precisely compensate data source $i$ for her effort ($\E p_i(y_i(\bar e_i), \vec{y}_{-i}(\vec{\bar{e}}_{-i})) = e_i$, making the contract tightly satisfy individual rationality constraints).
\end{enumerate}

Our goal is to study how pricing schemes of this form perform in the more general case where $|\mathcal{B}| > 1$, and competition between multiple data buyers becomes a critical consideration. In particular,  we assume the following form for each of the incentive mechanisms offered in the data market.
\begin{assumption}
\label{ass:price_structure}
Consider a data buyer $j$ and data source $i$.  It is assumed that $j$ offers $i$ a payment function of the form 
\begin{equation}
p_i^j(\vec{x}, \vec{y}) = c_i^j - d_i^j \left( y_i - \hat{f}^j_{(\vec{x}, \vec{y})_{-i}}(x_i) \right)^2,
\end{equation}
in exchange for knowledge of  $y_i$,where $c_i^j ,d_i^j\geq 0$ are parameters that the buyer $j$ is free to choose.
\end{assumption}
Note that these payments do not directly depend on the level of effort that any of the data sources exert, since the data buyers do not have a means to directly observe these values. The payments only depend on the data reported to them, and can be calculated by data buyers.
Having defined the necessary structures for the data markets we wish to study, we are now ready to study the competitive equilibria that arise in these marketplaces.

First, we note that for any data source $i$, due to the form of the payment contract, they will report the same value to all data buyers. 

\begin{proposition}
Fix any data source $i$. Pick any vector of variances $\vec{\sigma^2}$ (one variance for each data buyer), and let $e = \max~\{ \tilde e : \sigma_i^2(\tilde e) = (\vec{\sigma^2})_j \}$, i.e. $e$ is the minimum amount of effort for data source $i$ to generate measurements of variance $\vec{\sigma^2}$. Then, data source $i$ has higher payoff, defined by \eqref{eq:effort_selection}, by choosing variances $\sigma_i^2(e)$ for all $j$, than the payoff earned from providing each buyer $j$ with data of variance $(\vec{\sigma^2})_j$.
\end{proposition}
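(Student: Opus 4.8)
The plan is to reduce the data source's payoff to an explicit, monotone function of the variance it delivers to each buyer, and then argue that delivering the uniformly lowest achievable variance to everyone simultaneously maximizes every payment while minimizing effort.

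First I would compute the expected payment $\E[p_i^j]$ under Assumption~\ref{ass:price_structure}. Write the value source $i$ reports to buyer $j$ as $r_i^j = f(x_i) + \xi_i^j$ with $\E[\xi_i^j] = 0$ and $\mathrm{Var}(\xi_i^j) = v_i^j$, so that $v_i^j$ is exactly the variance buyer $j$ perceives. The key observation is that $\hat f^j_{-i}(x_i)$, buyer $j$'s leave-one-out estimate, is built only from the reports of the sources \emph{other} than $i$ and is therefore independent of $\xi_i^j$. Expanding $r_i^j - \hat f^j_{-i}(x_i) = \xi_i^j + \big(f(x_i) - \hat f^j_{-i}(x_i)\big)$ and taking expectations, the cross term vanishes (since $\E[\xi_i^j]=0$ and $\xi_i^j$ is independent of the remaining factor), giving
\begin{equation}
\E[p_i^j] = c_i^j - d_i^j\big(v_i^j + C_j\big), \qquad C_j := \E\big[(f(x_i) - \hat f^j_{-i}(x_i))^2\big].
\end{equation}
Crucially, $C_j$ does not depend on any choice source $i$ makes (the other sources' and buyers' strategies being fixed), and since $d_i^j \ge 0$ the expected payment is weakly decreasing in the delivered variance $v_i^j$.

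Next I would pin down the feasible delivered variances and their effort cost. A source that has produced a base reading of variance $\sigma_i^2(e_i)$ can costlessly inflate the variance seen by any buyer (by adding independent noise) but cannot report anything of smaller variance; hence $v_i^j \ge \sigma_i^2(e_i)$ for every $j$. To deliver $v_i^j = (\vec{\sigma^2})_j$ to \emph{all} buyers therefore forces $\sigma_i^2(e_i) \le \min_j (\vec{\sigma^2})_j$, which by monotonicity of $\sigma_i^2(\cdot)$ requires effort at least the $e$ defined in the statement, and makes $\sigma_i^2(e) = \min_j(\vec{\sigma^2})_j \le (\vec{\sigma^2})_j$ for all $j$. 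I would then compare the two strategies directly in \eqref{eq:effort_selection}: substituting the payment formula, the single-reading strategy (variance $\sigma_i^2(e)$ to every buyer, effort exactly $e$) yields payoff $\sum_{j\in\Bc}\big(c_i^j - d_i^j(\sigma_i^2(e)+C_j)\big) - e$, whereas providing buyer $j$ with variance $(\vec{\sigma^2})_j$ yields at most $\sum_{j\in\Bc}\big(c_i^j - d_i^j((\vec{\sigma^2})_j+C_j)\big) - e$, its effort being at least $e$. The difference is $\sum_{j\in\Bc} d_i^j\big((\vec{\sigma^2})_j - \sigma_i^2(e)\big) \ge 0$, each summand nonnegative since $\sigma_i^2(e) \le (\vec{\sigma^2})_j$ and $d_i^j \ge 0$, which establishes the claim.

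The main obstacle, and the step I would be most careful about, is the payment computation: it hinges on the independence of $\xi_i^j$ from the leave-one-out estimate $\hat f^j_{-i}(x_i)$, which is precisely what isolates the delivered variance as the only source-controlled quantity and collapses $C_j$ to a constant. The remaining subtlety is the effort bookkeeping, namely justifying that realizing the menu $(\vec{\sigma^2})_j$ costs at least $e$ while $\sigma_i^2(e) \le (\vec{\sigma^2})_j$; both follow once $\sigma_i^2(\cdot)$ is taken monotone in effort, so I would want that property stated explicitly. Finally, the inequality is strict exactly when the target variances are not all equal (or when strictly less effort suffices), so the single-report strategy is weakly dominant and strictly so away from the degenerate case.
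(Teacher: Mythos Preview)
Your argument is correct and mirrors the paper's own reasoning: the paper does not give a formal proof but simply observes, immediately after the proposition, that ``since the payment contract from each data buyer $j$ is increasing (in expectation) with respect to effort, data source $i$ will never have incentive to `add noise' to a measurement once the effort has been exerted.'' Your derivation of $\E[p_i^j] = c_i^j - d_i^j(v_i^j + C_j)$ via independence of the leave-one-out estimate from source $i$'s own noise, together with the effort bookkeeping, is precisely the rigorous unpacking of that sentence.
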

In other words, since the payment contract from each data buyer $j$ is increasing (in expectation) with respect to effort, data source $i$ will never have incentive to `add noise' to a measurement once the effort has been exerted. Thus, for the rest of this paper, we shall write $\vec{y}$ to denote the measurement reported to all data sources $j$.

\section{Results}
\label{sec:datam_results}

In this section, we analyze the behavior we can expect from each of the agents in the market place, by considering the game that forms between the members of $\mathcal{B}$ as they select the parameters in the contracts they offer to the data sources. 

Adopting standard game-theoretic short-hand notion, we denote the set of pricing
parameters buyer $k$ selects by $(c^k, d^k)$ , and we denote the choice of the
pricing parameters of the other members of $\mathcal{B}$ by $(c^{-k}, d^{-k})$. 
From now on, we use the index $k$ to single out a specific buyer, the index $q$ to single out a data source, the index $j$ to sum over a collection of buyers, and the index $i$ (and sometimes $l$) to sum over a collection of sources. 

We begin our analysis by determining under what conditions the data sources will accept the collection of contracts offered to them by the data buyers. 
Recall that data source $q$ will accept all of the contracts offered by the data sources if and only if the ex-ante total payments are non-negative \eqref{eq:total_IR} and each data buyer's payment is non-negative ex-ante \eqref{eq:single_IR}.

Let $\delta_x$ denote the probability measure that puts mass 1 at point $x$. Then, we may simplify \eqref{eq:total_IR} for a fixed $q$ by noting that:
\[
\E \left[ \sum_{j \in \Bc} p_q^j (\vec{x}, \vec{y} ) \right] =
\sum_{j \in \Bc} c_q^j - \E \sum_{j \in \Bc} d_q^j\left( y_q - \hat{f}^j_{\vec{x}_{-q},\vec{y}_{-q}}(x_q) \right)^2 =
\]
\[
\sum_{j \in \Bc} c_q^j - \sum_{j \in \Bc} d_q^j\left( \sigma_q^2(e_q) + g_j(\bar{x}_{-q}, \delta_{x_q}, \vec{\sigma^2}_{-q}) \right)
\]
Then, \eqref{eq:total_IR} holds if and only if:
\begin{equation}
\label{eq:IR_conditions}
\sum_{j \in \Bc} c_q^j - \sum_{j \in \Bc} d_q^j\left( \sigma_q^2(e_q) + g_j(\bar{x}_{-q}, \delta_{x_q}, \vec{\sigma^2}_{-q}) \right) \geq e_q
\end{equation}
Similarly, \eqref{eq:single_IR} holds if and only if:
\begin{equation}
\label{eq:IR_conditions2}
c_q^j \geq d_q^j \left( \sigma_q^2(e_q) + g_j(\bar{x}_{-q}, \delta_{x_q}, \vec{\sigma^2}_{-q}) \right)
\end{equation}

As our goal is to find situations where the buyers receive data from each of the data sources, we shall include equations \eqref{eq:IR_conditions} and \eqref{eq:IR_conditions2} as constraints in the game between data buyers. Indeed, given a choice of $(c^{-k}, d^{-k})$, the objective of buyer $k$ is to optimize the following problem:
\begin{eqnarray}
\min_{c^k,d^k} & J^k( (c^k, d^k), (c^{-k}, d^{-k}) ) \\
\text{s.t.} & \E\left[ \sum_{j \in \Bc} p_i^j( \vec{x},\vec{y}(\vec{e^*}) ) \right] - e_i^* \geq 0 \\
\label{eq:source_opt_constraint}
& e_i^* = \arg\max_{e_i} \E \left[ \sum_{j \in \Bc} p_i^j (\vec{x}, \vec{y}(\vec{e^*})) \right] - e_i  \\
& \E \left[ p_i^k(\vec{x},\vec{y}(\vec{e} ) ) \right] \geq 0 \\
& c_i^k \geq 0, d_i^k \geq 0
\end{eqnarray}
Each constraint holds for all $i \in \Sc$. 
Recall that $J^k$ was defined in \eqref{eq:buyer_obj}. Note that~\cite{cai:2015aa} showed that the payments induce dominant strategies, so \eqref{eq:source_opt_constraint} is an optimization that does not depend on $e_{-i}$.

In general, this may be a computationally difficult problem for $b_k$ to solve. For illustrative purposes, for the rest of this paper, we will assume specific forms for the estimators the buyers employ and the $\sigma$ functions which define the data sources. We first assume:

\begin{assumption}
\label{ass:sigma_form}
For each data source $i$,  $\sigma_i(e_i)$ is characterized by the the constant $\alpha_i > 0$ and of the form:
\begin{equation}
\sigma_i(e_i) = \exp(-\alpha_i e_i)
\end{equation}
\end{assumption}
Note that this implies that $\sigma$ is convex, strictly decreasing and always positive, which are all desirable properties in our context. Furthermore, note that this is the form of the standard deviation, not the variance.

We next determine the level of effort data sources will exert given the pricing parameters set by the data buyers.
Fix a data source $q$ and taking the derivative of \eqref{eq:effort_selection} with respect to $e_q$, we obtain:
\[
-2 \left( \sum_{j \in \Bc} d_q^j \right) \sigma_i(e_q) \frac{d}{de_q} \sigma_q(e_q) -1 =  
\]
\[
2  \left( \sum_{j \in \Bc}d_q^j\right) \alpha_q \exp(-2 \alpha_q e_q) - 1
\]
Setting this derivative equal to 0 yields:
\begin{equation}
\label{eq:specific_effort}
e_q^* = \frac{\ln\left(2 \left( \sum_{j \in \Bc}d_q^j\right) \alpha_q\right)}{2 \alpha_q}
\end{equation}
This is the optimum effort selection for data source $q$. We can also compute how this optimal point varies with $d_i^j$:
\[
\frac{\partial}{\partial d_q^j} e_q^* = \frac{1}{2 \left( \sum_{j \in \Bc}d_q^j\right)\alpha_q }
\]
Also we can easily calculate the optimum variance:
\begin{equation}
\label{eq:opt_variance_q}
\sigma_q^2(e_q^*) = \frac{1}{2 \left( \sum_{j \in \Bc}d_q^j\right) \alpha_q}
\end{equation}
\begin{assumption}\emph{(Separable estimators)}
\label{ass:separable}
For each buyer $k \in \Bc$, the estimator for $f$ that buyer $k$ employs, $\hat f^k$, is separable. In other words, there exists a function $h_k$ such that:
\[
g_k(\vec{x}, F, \vec{\sigma^2}) = \sum_{i \in \Sc} h_k(x_i, \vec{x}, F) \sigma_i^2
\]
Furthermore, we assume that $h \geq 0$.
\end{assumption}
Note that  linear regression, polynomial regression and finite-kernel regression all produce separable estimators. Applying Assumption~\ref{ass:separable} for the estimators, we may rewrite the loss function for buyer $k$ as:
\begin{eqnarray*}
J^k  ( (c^k,d^k), (c^{-k},d^{-k}) ) = 
 \sum_{i \in \Sc} h_k(x_i, \vec{x}, F_k) \sigma_i^2(e_i^*) - \\
  \sum_{j \in -k} \delta_j^k \sum_{i \in \Sc} h_j (x_i, \vec{x}, F_j) \sigma_i^2(e_i^*) + \\
  \eta^k \sum_{i \in \Sc} \left( c_i^k - d_i^k \left[ \sigma_i^2(e_i^*) + \sum_{l \in -i} h_k(x_l, \vec{x}_{-i}, \delta_{x_i}) \sigma_l^2(e_l^*) \right] \right) 
\end{eqnarray*}
Recall that each $x_i$ is fixed and common knowledge; thus, we can replace each of the above evaluations of the $h$ functions with constants. Define $\beta_i^j = h_j(x_i, \vec{x}, F_j)$, $\xi_{i,l}^j = h_j(x_l, \vec{x}_{-i}, \delta_{x_i})$ for $i \neq l$ and $\xi_{i,i}^j = 1$. Note that $\xi \geq 0$. Then, this becomes:
\begin{eqnarray*}
J^k( (c^k,d^k), (c^{-k},d^{-k}) ) = 
\sum_{i \in \Sc} \beta_i^k \sigma_i^2(e_i^*) - \\
 \sum_{j \in -k} \delta_j^k \sum_{i \in \Sc} \beta_i^j \sigma_i^2(e_i^*) + \\
\eta^k \sum_{i \in \Sc} \left( c_i^k - d_i^k \left[ \sigma_i^2(e_i^*) + \sum_{l \in -i} \xi_{i,l}^k \sigma_l^2(e_l^*) \right] \right) 
=
\\
\sum_{i \in \Sc} \left( \beta_i^k - \sum_{j \in -k} \delta_j^k \beta_i^j \right) \sigma_i^2(e_i^*) + \\
\eta^k \sum_{i \in \Sc} \left( c_i^k - d_i^k \left[ \sigma_i^2(e_i^*) + \sum_{l \in -i} \xi_{i,l}^k \sigma_l^2(e_l^*) \right] \right) 
\end{eqnarray*}
In efforts towards succinctness, let $\gamma_i^k = \beta_i^k - \sum_{j \in -k} \delta_j^k \beta_i^j$. We will now plug in the expression for $\sigma_i^2(e_i^*)$ in \eqref{eq:opt_variance_q}, yielding:
\begin{eqnarray*}
J^k( (c^k,d^k), (c^{-k},d^{-k}) ) = 
\sum_{i \in \Sc} \gamma_i^k \sigma_i^2(e_i^*) + \\ 
\eta^k \sum_{i \in \Sc} \left( c_i^k - d_i^k \left[ \sigma_i^2(e_i^*) + \sum_{l \in -i} \xi_{i,l}^k \sigma_l^2(e_l^*) \right] \right) 
 =
 \\
\sum_{i \in \Sc} \frac{ \gamma_i^k }{2 \left( \sum_{j \in \Bc}d_i^j\right) \alpha_i} + 
\\
\eta^k \sum_{i \in \Sc} \Bigg( c_i^k - d_i^k \Bigg[ \frac{ 1 }{2 \left( \sum_{j \in \Bc}d_i^j\right) \alpha_i} + \\
 \sum_{l \in -i}  \frac{ \xi_{i,l}^k }{2 \left( \sum_{j \in \Bc}d_l^j\right) \alpha_l} \Bigg] \Bigg) 
=
\\
\sum_{i \in \Sc} \frac{ \gamma_i^k }{2 \left( \sum_{j \in \Bc}d_i^j\right) \alpha_i} + \\
\eta^k \sum_{i \in \Sc} \left( c_i^k - d_i^k \left[ \sum_{l \in \Sc} \frac{ \xi_{i,l}^k }{2 \left( \sum_{j \in \Bc}d_l^j\right) \alpha_l} \right] \right) 
\end{eqnarray*}
(Note here we joyfully take advantage of our convention that $\xi_{i,i}^k = 1$.)

Finally, similar reasoning lets us write for any data source $q$ and data buyer $k$:
\[
\E \left[ p_q^k (\vec{x}, \vec{y} ) \right] =
c_q^k - d_q^k\left( \sigma_q^2(e_q) + g_k(\bar{x}_{-q}, \delta_{x_q}, \vec{\sigma^2}_{-q}) \right) =
\]
\[
c_q^k - d_q^k\left( \sigma_q^2(e_q) + \sum_{i \in -q} h_k(x_i, \vec{x}_{-q}, \delta_{x_i}) \sigma_i^2(e_i) \right) =
\]
\[
c_q^k - d_q^k \left( \sum_{i \in \Sc} \xi_{q,i}^k \sigma_i^2(e_i) \right) 
\]
At optimum effort levels, this becomes:
\[
\E \left[ p_q^k (\vec{x}, \vec{y} ) \right] =c_q^k - d_q^k \left( \sum_{i \in \Sc} \xi_{q,i}^k \sigma_i^2(e_i^*) \right) =
\]
\[
c_q^k - d_q^k \left( \sum_{i \in \Sc}  \frac{\xi_{q,i}^k}{2 \left( \sum_{j \in \Bc}d_i^j\right) \alpha_i} \right)
\]

Also using the expression for $e_i^*$ given in \eqref{eq:specific_effort}, buyer $k$ has the following optimization problem:
{\scriptsize 
\begin{eqnarray}
\label{eq:opt_p}
\min_{c^k,d^k}~
&
\sum_{i \in \Sc} \frac{ \gamma_i^k }{2 d_i^{total} \alpha_i} + 
\eta^k \sum_{i \in \Sc} \left( c_i^k - d_i^k \left[ \sum_{l \in \Sc} \frac{ \xi_{i,l}^k }{2 d_l^{total} \alpha_l} \right] \right) 
\\
\label{eq:const_total_IR_final}
\text{s.t.} 
& \sum_{j \in \Bc} \left[ c_i^j - d_i^j \left( \sum_{l \in \Sc}  \frac{\xi_{i,l}^j}{2 d_l^{total} \alpha_l} \right) \right] - \frac{\ln\left(2 d_i^{total} \alpha_i\right)}{2 \alpha_i} \geq 0 \\
\label{eq:const_single_IR_final}
& c_i^k - d_i^k \left( \sum_{l \in \Sc}  \frac{\xi_{i,l}^k}{2 d_l^{total} \alpha_l} \right) \geq 0 \\
\label{eq:d_tot}
& d_i^{total} = \sum_{j \in \Bc}d_i^j \\
& c_i^k \geq 0, d_i^k \geq 0
\end{eqnarray}
}
Every constraint above holds for all $i \in \Sc$. Here, \eqref{eq:d_tot} is a definitional, rather than binding, constraint. Also, note that without loss of generality, we can take $\eta^k = 1$, by normalizing the $\gamma_i^k$ accordingly. Additionally, we can remove the constraint $c_i^k \geq 0$, as it is redundant in light of the constraint $c_i^k - d_i^k \left( \sum_{l \in \Sc}  \frac{\xi_{i,l}^k}{2 d_l^{total} \alpha_l} \right) \geq 0$, since $\xi \geq 0$ and $d \geq 0$.

 This leads to the following result.
\begin{theorem}
\label{thm:nash}
Consider the game where each buyer's objective is to solve the optimization in \eqref{eq:opt_p}, and assume $\gamma_i^j \geq 0$ for all $i \in \Sc, j \in \Bc$. Then there are either an infinite number of generalized Nash equilibria, or there is no generalized Nash equilibrium.

Furthermore, in the case where there are an infinite number of generalized Nash equilibria, there is a unique collection of $d$ parameters, in the sense that if $(\vec{c},\vec{d})$ and $(\vec{c'},\vec{d'})$ are both generalized Nash equilibria, then $\vec{d} = \vec{d'}$. Additionally, the $c$ parameters lie in the convex polytope defined by the following constraints:
\[
\sum_{j \in \Bc} c_i^j = \sum_{j \in \Bc}d_i^j \left( \sum_{l \in \Sc}  \frac{\xi_{i,l}^j}{2 d_l^{total} \alpha_l} \right) + \frac{\ln\left(2d_i^{total} \alpha_i\right)}{2 \alpha_i}
\]
\[
c_i^k \geq d_i^k \left( \sum_{l \in \Sc}  \frac{\xi_{i,l}^k}{2 d_l^{total} \alpha_l} \right)
\]
The effort exerted by each data source is the same in each generalized Nash equilibrium.
\end{theorem}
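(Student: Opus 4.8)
The plan is to eliminate the $c$ variables in favor of each buyer's expected payment to each source, and thereby decouple the buyers' coupled program across the sources. For buyer $k$ and source $i$, introduce $\pi_i^k := c_i^k - d_i^k\left(\sum_{l\in\Sc}\frac{\xi_{i,l}^k}{2 d_l^{total}\alpha_l}\right)$, which is exactly $\E[p_i^k]$ at the optimal effort levels. Under this change of variables the objective \eqref{eq:opt_p} collapses to $\sum_{i\in\Sc}\frac{\gamma_i^k}{2 d_i^{total}\alpha_i}+\sum_{i\in\Sc}\pi_i^k$ (taking $\eta^k=1$), the single-source constraint \eqref{eq:const_single_IR_final} becomes simply $\pi_i^k\ge 0$, and the total constraint \eqref{eq:const_total_IR_final} becomes $\sum_{j\in\Bc}\pi_i^j\ge \frac{\ln(2 d_i^{total}\alpha_i)}{2\alpha_i}=:e_i^*$. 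The crucial gain is that the cross-source coupling carried by the $\xi$ terms disappears: each buyer's reformulated program separates into one independent subproblem per source $i$, in the variables $(d_i^k,\pi_i^k)_{k\in\Bc}$ with $d_i^{total}=\sum_j d_i^j$. I would carry out the rest of the argument one source at a time.

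Fix a source $i$ and study the resulting $M$-player game. Since each buyer's objective is strictly increasing in its own $\pi_i^k$, at any equilibrium every buyer drives $\pi_i^k$ down to the smallest value the constraints permit, so $\pi_i^k=\max\{0,\;e_i^*-\sum_{j\ne k}\pi_i^j\}$. Summing these best responses shows that whenever $e_i^*>0$ the aggregate constraint must bind, $\sum_j\pi_i^j=e_i^*$, while each $\pi_i^k\ge 0$ is otherwise free — i.e. the equilibrium payments lie on the simplex $\{\pi_i\ge 0:\sum_j\pi_i^j=e_i^*\}$. Translating back through $\pi_i^k=c_i^k-d_i^k(\cdots)$ reproduces precisely the two polytope constraints in the statement. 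This face is positive-dimensional as soon as $M\ge 2$, and every point of it is again an equilibrium (holding $\vec d$ fixed, each buyer's $c$-choice is still a best response), so no equilibrium can be isolated. That observation — \emph{any} equilibrium is embedded in a continuum — is exactly the ``zero or infinitely many'' alternative, so the first claim reduces to deciding feasibility of the $d$-subproblem.

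Next I would pin down the aggregate $d_i^{total}$, which governs the variance $\sigma_i^2(e_i^*)=1/(2 d_i^{total}\alpha_i)$ and hence the effort. Substituting the best-response payments, buyer $k$'s reduced cost at source $i$ depends on its own $d_i^k$ only through the total $S:=d_i^{total}$, and equals $\frac{\gamma_i^k}{2S\alpha_i}$ in the regime where $k$ free-rides and $\frac{\gamma_i^k}{2S\alpha_i}+e_i^*-\sum_{j\ne k}\pi_i^j$ in the regime where $k$ pays. Using $\gamma_i^k\ge 0$, the first piece is decreasing in $S$ and the second is convex with interior stationary point $S=\gamma_i^k$; matching these across buyers forces the common total to the single value $S_i^*=\max\{\max_k\gamma_i^k,\;1/(2\alpha_i)\}$. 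Because this value is independent of the equilibrium selected, the induced variance and the effort \eqref{eq:specific_effort} are identical in every equilibrium, which establishes the final sentence of the theorem.

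The step I expect to be the genuine obstacle is upgrading uniqueness of the aggregate $d_i^{total}$ to uniqueness of the full vector $\vec d$. The difficulty is structural: the variance, the effort, and therefore both the objective and the constraints depend on the $d_i^j$ only through the sum $d_i^{total}$, so a buyer is a priori indifferent among all splits that realize the equilibrium total. To close the argument one must show that the per-buyer first-order and complementarity conditions at source $i$ — in particular which buyers sit at the boundary $d_i^k=0$ versus the interior, and which are paying versus free-riding — admit a unique consistent assignment. I would attempt this through the KKT system for the $d_i^k$, treating the multiplier on the aggregate IR constraint as a shared ``price'' and arguing that stationarity, together with $\gamma_i^k\ge 0$ and $\xi\ge 0$, leaves no freedom in the individual contributions. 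I regard making this last implication airtight — rather than only pinning the total $d_i^{total}$ — as the crux of the proof, and the place where an additional nondegeneracy hypothesis may turn out to be needed.
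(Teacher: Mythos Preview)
Your change of variables $\pi_i^k=c_i^k-d_i^k\sum_l\xi_{i,l}^k/(2d_l^{total}\alpha_l)$ is a clean way to rewrite the program, and for the $c$--analysis (with $\vec d$ held fixed) it correctly yields that the aggregate IR constraint binds and that the equilibrium $c$'s fill out the stated polytope. The gap is in the $d$--analysis: the claim that ``the cross--source coupling carried by the $\xi$ terms disappears'' is false. In a GNEP, buyer $k$ holds $(c^{-k},d^{-k})$ fixed, not $\pi^{-k}$. Since
\[
\pi_i^{\,j}\;=\;c_i^{\,j}-d_i^{\,j}\sum_{l\in\Sc}\frac{\xi_{i,l}^{\,j}}{2d_l^{total}\alpha_l},
\]
every rival's expected payment $\pi_i^{\,j}$ depends on $d_q^{total}$, and hence on buyer $k$'s own choice $d_q^k$, for \emph{every} source $q$ with $\xi_{i,q}^{\,j}\neq 0$. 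When $k$ perturbs $d_q^k$ this moves the total--IR constraints at \emph{other} sources $i\neq q$, so the per--source subproblems are not independent after all. The $\xi$ coupling has only been relabeled, not removed.

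Because of this, your formula $S_i^{*}=\max\{\max_k\gamma_i^k,\,1/(2\alpha_i)\}$ for the equilibrium total cannot be right: it is independent of $\xi$, whereas the equilibrium in the paper depends essentially on $\xi$ (and indeed Theorem~\ref{th:always_lose} exploits precisely that dependence). The paper proceeds by first proving (Lemma~\ref{lem:datam_eq_holds}) that the total IR constraint is tight at any GNE, then substituting $c_i^k$ out of buyer $k$'s objective. After an index swap the objective becomes separable in $i$, and the first--order condition in $d_q^k$ reads
\[
d_q^k \;=\; \gamma_q^k \;+\; \sum_{j\in -k}\sum_{l\in -q} d_l^{\,j}\,\xi_{l,q}^{\,j},
\]
one equation for each $(q,k)$. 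Stacking these gives a Leontief system $\vec d = A\vec d + \vec\gamma$ with $A\ge 0$; such a system has a nonnegative solution if and only if $\rho(A)<1$, and when it does the solution is unique. This simultaneously resolves the ``zero or infinitely many'' dichotomy (it hinges on $\rho(A)$, hence on $\xi$, not on $\gamma$) and the point you flagged as the crux: uniqueness of the \emph{individual} $d_i^k$'s, not just their totals, falls out of invertibility of $I-A$. Your instinct that a KKT system in the $d$'s is the right object was correct; the missing piece is that the cross--source $\xi$ terms survive into those stationarity conditions and are exactly what makes the system determinate.
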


Before proving this theorem, we discuss the assumption that $\gamma_i^j \geq 0$. This implies that, for each data buyer, the penalty for other data buyer's successful estimation does not outweigh the benefit of having a good estimator. This assumption means that no data buyer will have incentive to drive the variance of one data source up towards infinity.

We prove the following useful lemma, and then prove our theorem.

\begin{lemma}
\label{lem:datam_eq_holds}
Suppose $(\vec{c},\vec{d})$ is a GNE for the game defined by \eqref{eq:opt_p}. The following equality holds for all $i$ and $k$:
\[
c_i^k = \sum_{j \in \Bc}d_i^j \left( \sum_{l \in \Sc}  \frac{\xi_{i,l}^j}{2 d_l^{total} \alpha_l} \right) + \frac{\ln\left(2 d_i^{total} \alpha_i\right)}{2 \alpha_i} -
\sum_{j \in -k} c_i^j
\]
In other words, \eqref{eq:const_total_IR_final} is always tight in equilibrium.
\end{lemma}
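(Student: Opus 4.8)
The plan is to exploit the fact that the objective \eqref{eq:opt_p} is affine and strictly increasing in each variable $c_i^k$, since its coefficient there is $\eta^k > 0$ (normalized to $1$). Consequently, in any best response buyer $k$ drives every $c_i^k$ down to the lower boundary of its feasible region. The only constraints bounding $c_i^k$ from below are the single-source individual-rationality constraint \eqref{eq:const_single_IR_final} and the total individual-rationality constraint \eqref{eq:const_total_IR_final}; writing the latter as a lower bound on $c_i^k$ (moving $\sum_{j \in -k} c_i^j$ to the other side) it reads exactly $c_i^k \geq \sum_{j \in \Bc} d_i^j ( \sum_{l \in \Sc} \frac{\xi_{i,l}^j}{2 d_l^{total}\alpha_l} ) + \frac{\ln(2 d_i^{total}\alpha_i)}{2\alpha_i} - \sum_{j \in -k} c_i^j$. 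Thus the best-response value of $c_i^k$ is the maximum of the single-IR floor and this total-IR floor, and the claimed equality is equivalent to the statement that the total-IR floor is the binding one, i.e. that \eqref{eq:const_total_IR_final} holds with equality.

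First I would suppose, toward a contradiction, that \eqref{eq:const_total_IR_final} is slack at some source $i$. Then for every buyer $k$ the total-IR floor is strictly below the achieved value $c_i^k$, so the single-IR floor must be the active one; that is, \eqref{eq:const_single_IR_final} is tight for every $k$ at source $i$, giving $c_i^k = d_i^k \sum_{l \in \Sc} \frac{\xi_{i,l}^k}{2 d_l^{total}\alpha_l}$. Summing these equalities over $k$ and subtracting from the strictly slack total-IR inequality then forces $\frac{\ln(2 d_i^{total}\alpha_i)}{2\alpha_i} < 0$, i.e. the source's optimal effort $e_i^*$ from \eqref{eq:specific_effort} is strictly negative.

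The crux is then to rule out $e_i^* < 0$ at a genuine equilibrium. The key observation is that when all of buyer $k$'s single-IR constraints are tight, the bracketed payment term in \eqref{eq:opt_p} coincides term-by-term with the left-hand side of \eqref{eq:const_single_IR_final}, so the entire payment cost vanishes and the objective collapses to $J^k = \sum_{i \in \Sc} \frac{\gamma_i^k}{2 d_i^{total}\alpha_i}$. Under the hypothesis $\gamma_i^k \geq 0$ this is nonincreasing in $d_i^k$, and strictly decreasing whenever $\gamma_i^k > 0$. Since $e_i^* < 0$ makes \eqref{eq:const_total_IR_final} strictly slack at source $i$, buyer $k$ can raise $d_i^k$ by a small amount while readjusting $c^k$ so as to keep her single-IR constraints tight; for a small enough step this preserves feasibility of all constraints and strictly lowers $J^k$, contradicting the best-response condition. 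Hence $e_i^* \geq 0$, \eqref{eq:const_total_IR_final} is tight, and rearranging the tight constraint to isolate $c_i^k$ yields the stated equality.

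I expect the feasibility bookkeeping in the last step to be the main obstacle: raising $d_i^k$ changes the shared quantity $d_i^{total}$, which appears in the denominators $2 d_l^{total}\alpha_l$ of every other source's constraints as well, so I must argue via continuity that a sufficiently small increase keeps all of the other total-IR and single-IR constraints satisfied. The strict-improvement step also requires some buyer with $\gamma_i^k > 0$; the fully degenerate situation in which $\gamma_i^k = 0$ for all buyers at source $i$ is a boundary case that must be handled (or excluded) separately, since there the buyers are indifferent to $d_i^k$ and the tightness of \eqref{eq:const_total_IR_final} is not forced by strict monotonicity.
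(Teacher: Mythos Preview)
Your approach mirrors the paper's almost exactly: both argue that if \eqref{eq:const_total_IR_final} is slack at some source $i$ then \eqref{eq:const_single_IR_final} must bind for \emph{every} buyer at that source, deduce $2d_i^{total}\alpha_i<1$, and then exhibit a profitable deviation in which some buyer raises $d_i^k$. The paper makes a finite jump to $2d_i^{total}\alpha_i=1$ whereas you perturb infinitesimally, but this is cosmetic.

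There is one genuine soft spot, present in both your write-up and the paper's. When you say ``when all of buyer $k$'s single-IR constraints are tight \ldots the entire payment cost vanishes and the objective collapses to $J^k=\sum_{i}\gamma_i^k/(2d_i^{total}\alpha_i)$,'' you are using tightness of \eqref{eq:const_single_IR_final} for buyer $k$ at \emph{every} source $q$, whereas the contradiction hypothesis only gives tightness at the particular source $i$; at other sources $q$ it may be \eqref{eq:const_total_IR_final} that binds. Consequently the proposed deviation ``readjust $c^k$ to keep all single-IR constraints tight'' can violate the total-IR constraint at such $q$. The fix is to let buyer $k$ lower each $c_q^k$ only as far as feasibility allows: after increasing $d_i^{total}$, both the single-IR and the total-IR floors at every $q\neq i$ drop, so $c_q^k$ can be reduced by at least $d_q^k\xi_{q,i}^k\,\delta$ (where $\delta$ is the decrease in $1/(2d_i^{total}\alpha_i)$) without violating either constraint. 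A short computation then shows the net change in $J^k$ is at most $-\gamma_i^k\delta$, which is strictly negative provided $\gamma_i^k>0$ for some buyer $k$. Your caveat about the degenerate case $\gamma_i^k=0$ for all $k$ is therefore exactly the right boundary to flag; the paper does not address it either.
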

\begin{proof}
To prove this, note that, by the cost function of buyer $k$, $c_i^k$ will always be chosen such that at least one of \eqref{eq:const_total_IR_final} and \eqref{eq:const_single_IR_final} is tight. Suppose \eqref{eq:const_single_IR_final} is exclusively active, i.e. 
\[
c_i^k - d_i^k \left( \sum_{l \in \Sc}  \frac{\xi_{i,l}^k}{2 d_l^{total} \alpha_l} \right) = 0
\]
\begin{equation}
\label{eqn:IR_ineq_const}
c_i^k > \sum_{j \in \Bc}d_i^j \left( \sum_{l \in \Sc}  \frac{\xi_{i,l}^j}{2 d_l^{total} \alpha_l} \right) + \frac{\ln\left(2 d_i^{total} \alpha_i\right)}{2 \alpha_i} -
\sum_{j \in -k} c_i^j
\end{equation}
Note that \eqref{eqn:IR_ineq_const} is the same constraint for every data buyer. In other words, if it is loose for $k$, it is loose for all other $j$. Thus, some other buyer $j$ can reduce their $c_i^j$ and lower their cost, and thus $(\vec{c},\vec{d})$ cannot be an equilibrium.

This argument does fall apart in one situation, however. No buyer can reduce their cost just by modifying $c$ if \eqref{eq:const_single_IR_final} is tight for all buyers $k$, i.e. for all $k$:
\[
c_i^k - d_i^k \left( \sum_{l \in \Sc}  \frac{\xi_{i,l}^k}{2 d_l^{total} \alpha_l} \right) = 0
\]
In this case, \eqref{eq:const_total_IR_final}, which we assumed held loosely, becomes $2 d_i^{total} \alpha_i < 1$. Let buyer $k$ increase $d_i^k$ such that $2 d_i^{total} \alpha_i = 1$, and then choose a new $c^k$ such that \eqref{eq:const_single_IR_final} holds tightly, i.e. $c^k = d_i^k \left( \sum_{l \in \Sc}  \frac{\xi_{i,l}^k}{2 d_l^{total} \alpha_l} \right)$. Note that this decreases their cost:
\[
\sum_{i \in \Sc} \frac{ \gamma_i^k }{2 d_i^{total} \alpha_i} > \sum_{i \in \Sc} \gamma_i^k
\]
(This uses the fact that, since \eqref{eq:const_single_IR_final} holds for all buyers $j$, the second term disappears.) Additionally, all the constraints of the original optimization are still satisfied, so $(c_i^k, d_i^k)$ was not an optimizer for buyer $k$.

This concludes our proof.
\end{proof}
\begin{proof} \emph{ (Theorem~\ref{thm:nash}) }
We invoke Lemma~\ref{lem:datam_eq_holds} and substitute this into the objective function, \eqref{eq:opt_p}, for buyer $k$. Let:
\[
J\left(\vec{c},\vec{d}\right) = \sum_{i \in \Sc} \Bigg( \frac{ \gamma_i^k }{2 d_i^{total} \alpha_i} + 
\]
\[
\sum_{j \in -k} \left( d_i^j \left( \sum_{l \in \Sc}  \frac{\xi_{i,l}^j}{2 d_l^{total} \alpha_l} \right) - c_i^j\right)+ \frac{\ln\left(2 d_i^{total} \alpha_i\right)}{2 \alpha_i} \Bigg)
\]
This yields:
{
\begin{eqnarray}
\nonumber
\min_{c^k,d^k}~
&
J\left(\vec{c},\vec{d}\right)
\\ 
\nonumber
\text{subject to } 
& c_i^k - d_i^k \left( \sum_{l \in \Sc}  \frac{\xi_{i,l}^k}{2 d_l^{total} \alpha_l} \right) \geq 0 \\
\nonumber
& d_i^{total} = \sum_{j \in \Bc}d_i^j \\
\nonumber
& d_i^k \geq 0
\end{eqnarray}
}
We quickly manipulate the cost function a little to a more desirable form:
\[
\sum_{i \in \Sc} \Bigg( \frac{ \gamma_i^k }{2 d_i^{total} \alpha_i} + 
\]
\[
\sum_{j \in -k} \left( d_i^j \left( \sum_{l \in \Sc}  \frac{\xi_{i,l}^j}{2 d_l^{total} \alpha_l} \right) - c_i^j\right)+ \frac{\ln\left(2 d_i^{total} \alpha_i\right)}{2 \alpha_i} \Bigg) =
\]
\[
\sum_{i \in \Sc} \left( \frac{ \gamma_i^k }{2 d_i^{total} \alpha_i}+ \frac{\ln\left(2 d_i^{total} \alpha_i\right)}{2 \alpha_i} \right)
+ 
\]
\[
\sum_{i \in \Sc} \sum_{j \in -k}\sum_{l \in \Sc} \frac{ d_i^j \xi_{i,l}^j}{2 d_l^{total} \alpha_l}  -
\sum_{i \in \Sc} \sum_{j \in -k}c_i^j =
\]
\[
\sum_{i \in \Sc} \left( \frac{ \gamma_i^k }{2 d_i^{total} \alpha_i}+ \frac{\ln\left(2 d_i^{total} \alpha_i\right)}{2 \alpha_i} \right)
+
\]
\[
\sum_{i \in \Sc} \sum_{j \in -k}\sum_{l \in \Sc} \frac{ d_l^j \xi_{l,i}^j}{2 d_i^{total} \alpha_i}  -
\sum_{i \in \Sc} \sum_{j \in -k}c_i^j =
\]
\[
\sum_{i \in \Sc} \Bigg( \frac{ \gamma_i^k }{2 d_i^{total} \alpha_i}+ 
\]
\[
\sum_{j \in -k} \left( \sum_{l \in \Sc} \frac{ d_l^j \xi_{l,i}^j}{2 d_i^{total} \alpha_i} - c_i^j \right)+
\frac{\ln\left(2 d_i^{total} \alpha_i\right)}{2 \alpha_i}\Bigg)
\]
Note the index swap on the $\xi$ terms in the second equality. Then 
define:
\[
J_i^k(d_i^k, c^{-k}, d^{-k}) = 
\frac{ \gamma_i^k }{2 d_i^{total} \alpha_i}+ 
\]
\[
\sum_{j \in -k} \left( \sum_{l \in \Sc} \frac{ d_l^j \xi_{l,i}^j}{2 d_i^{total} \alpha_i} - c_i^j \right)+
\frac{\ln\left(2 d_i^{total} \alpha_i\right)}{2 \alpha_i}
=
\]
\[
\frac{ \gamma_i^k +  \sum_{j \in -k}\sum_{l \in \Sc}d_l^j \xi_{l,i}^j }{2 d_i^{total} \alpha_i}-\sum_{j \in -k}  c_i^j +
\frac{\ln\left(2 d_i^{total} \alpha_i\right)}{2 \alpha_i}
\]

Thus, the overall optimization can again be re-written:
\begin{eqnarray}
\nonumber
\min_{c^k,d^k}~
&
\sum_{i \in \Sc} 
J_i^k(d_i^k,c^{-k},d^{-k})
\\ 
\nonumber
\text{subject to } 
& c_i^k - d_i^k \left( \sum_{l \in \Sc}  \frac{\xi_{i,l}^k}{2 d_l^{total} \alpha_l} \right) \geq 0 \\
\nonumber
& d_i^{total} = \sum_{j \in \Bc}d_i^j \\
\nonumber
& d_i^k \geq 0
\end{eqnarray}
We differentiate the cost with respect to $d_q^k$:
\[
\frac{\partial}{\partial d_q^k} \sum_{i \in \Sc} 
J_i^k(d_i^k,c^{-k},d^{-k}) =
\frac{\partial}{\partial d_q^k} J_q^k(d_q^k,c^{-k},d^{-k}) =
\]
\[
- \frac{ \gamma_q^k +  \sum_{j \in -k}\sum_{l \in \Sc}d_l^j \xi_{l,q}^j }{2 (d_q^{total})^2 \alpha_q}+ \frac{1}{2 d_q^{total} \alpha_q} =
\]
\[
\frac{ -\gamma_q^k - \sum_{j \in -k}\sum_{l \in \Sc}d_l^j \xi_{l,q}^j  + d_q^{total} }{2 (d_q^{total})^2 \alpha_q} =
\]
\[
\frac{ -\gamma_q^k -\sum_{j \in -k}\sum_{l \in -q}d_l^j \xi_{l,q}^j + d_q^k }{2 (d_q^{total})^2 \alpha_q}
\]
Note that we use the fact that $\xi_{q,q}^j = 1$ for all $j$. 
It is easy to see that:
{\scriptsize 
\[
\frac{\partial}{\partial d_q^k} J_q^k(d_q^k,c^{-k},d^{-k})
\begin{cases}
< 0 & \text{if } 0 \leq d_q^{k} < \gamma_q^k + \sum_{j \in -k}\sum_{l \in -q}d_l^j \xi_{l,q}^j \\
= 0 & \text{if } d_q^{k} = \gamma_q^k + \sum_{j \in -k}\sum_{l \in -q}d_l^j \xi_{l,q}^j \\
> 0 & \text{if } d_q^{k} > \gamma_q^k + \sum_{j \in -k}\sum_{l \in -q}d_l^j \xi_{l,q}^j 
\end{cases}
\]
}
Thus, the maximizing $d_q^k$ is given by:
\begin{equation}
\label{eq:optimal_d}
d_q^k = \gamma_q^k + \sum_{j \in -k}\sum_{l \in -q}d_l^j \xi_{l,q}^j
\end{equation}
Performing this analysis for all combinations of $q \in \mathcal{S}$ and $k \in \mathcal{B}$  yields a system of $M\times N$ equations  with $M \times N$ unknowns, of the form \eqref{eq:optimal_d}.

As we have before, let $\vec{d}$ denote a column vector with entries $d_i^j$ for each $i \in \Sc$ and $j \in \Bc$. Similarly, let $\vec{\gamma}$ denote a column vector containing all the terms of the form $\gamma_i^j$. Then,
we may represent this system of equations with the following matrix equation:
\begin{equation}
\label{eq:eq_conditions}
\vec{d} = A \vec{d} + \vec{\gamma}
\end{equation} 
Here, $A$ is a non-negative matrix whose entries are the values of the various $\xi$ parameters at the appropriate places, such that \eqref{eq:eq_conditions} expresses the set of equality constraints defined by \eqref{eq:optimal_d} for all $q \in \Sc$ and $k \in \Bc$. 
To find an GNE of this game, it suffices to find a solution to \eqref{eq:eq_conditions} such that $d_i^j \geq 0$ for all $i$ and $j$.

Systems of equations of this form are well studied in the economics literature, as they are of the form specified by the celebrated Leontief input-output model. It has been shown that such systems of equations have a non-negative solution if and only if $\rho(A) < 1$, where $\rho(A)$ is the spectral radius of $A$~\cite{Stanczak2006}. Moreover, if such a solution exists, it must be unique.

Thus, if $\rho(A) < 1$, inversion of this $A$ matrix yields the equilibrium $\vec{d}$, and, by Lemma~\ref{lem:datam_eq_holds}, we can pick any $\vec{c}$ that satisfies:
\[
\sum_{j \in -\Bc} c_i^j = \sum_{j \in \Bc}d_i^j \left( \sum_{l \in \Sc}  \frac{\xi_{i,l}^j}{2 d_l^{total} \alpha_l} \right) + \frac{\ln\left(2 d_i^{total} \alpha_i\right)}{2 \alpha_i} 
\]
\[
c_i^k \geq d_i^k \left( \sum_{l \in \Sc}  \frac{\xi_{i,l}^k}{2 d_l^{total} \alpha_l} \right)
\]

If $\rho(A) \geq 1$, there will not exist a non-negative solution and there is no point $(\vec{c},\vec{d})$ that simultaneously optimizes \eqref{eq:opt_p} for all $k$. It follows that there is either a unique $\vec{d}$ that will constitute a Nash solution for the game, which produces a convex polytope of potential GNE, or there this no solution to the game, as desired.
\end{proof}

It is interesting to note that the existence of GNE depends solely on the value of the $\xi$ parameters; it does not depend on the magnitude of the $\gamma$ parameters. This implies that the existence or non-existence of GNE in this game is simply an artifact of the incentive mechanisms we have chosen to analyze, and does not depend on whether or not there are solutions that are beneficial to all parties involved. Note that we chose this incentive mechanism based on several desirable properties in the single-buyer case; whether or not there exist mechanisms that extend to multi-buyer games in a fashion that provides good efficiency properties is an open problem that we are currently investigating. 
In Section~\ref{sec:datam_examples}, we calculate the $\xi$ parameters for a specific example, and see how equilibrium solutions in these marketplaces collapse as the characteristics are varied.

Additionally, note that, in the case where there is a continuum of GNE, the effort exerted by data sources and $\vec{d}$ parameters are the same across all equilibria. The ambiguity arises in the $\vec{c}$ parameters. In other words, the ambiguity arises in determining which data buyers will pay to ensure that each data source's total compensation covers the cost of their effort. 
In the extreme case, it is possible for one firm to pay for the entirety of the expected compensation offered to the data sources, while the the firms pay nothing on expectation. That is, for some $k \in \mathcal{B}$, $\sum_{i \in \mathcal{S}} p_i^k(\vec{x}, \vec{y}) = \sum_{i \in \mathcal{S}} e_i^*$, and for all $j \neq k $,  $\sum_{i \in \mathcal{S}} p_i^j(\vec{x}, \vec{y}) = 0$. 
In Section~\ref{sec:datam_discussion} we discuss possible mechanisms to alleviate the disparity that may arise in these situations.

We next turn to analyzing the total utility experienced in the marketplace for a given outcome of the game. We begin with the following definition.
\begin{definition}
\emph{(Ex-ante social loss of the data market)} 
Suppose that $\eta^j = 1$ for all buyers $j$. 
Let $\vec{e}$ be the vector denoting the level of effort the data sources exert. Then, we define the \emph{ex-ante social loss} the marketplace to be the sum of the utility functions of all the data buyers and data sources:
\[
\mathcal{L}(\vec{e}) = \sum_{j \in \mathcal{B}} \Bigg(
\E \Bigg[ 
\left( \hat{f}^j_{\vec{x},\vec{y^j}}(x^*) - f(x^*) \right)^2 -
\]
\[
\sum_{k \in -j} \delta_k^j \left( \hat{f}^k_{\vec{x},\vec{y^k}}(x^*) - f(x^*) \right)^2
\Bigg]
\Bigg) + \sum_{i \in \Sc} e_i
\]
\end{definition}
Note that this sum does not include any of the payments made in the marketplace, as they are simply lossless transfers of wealth. We require the additionally assumption that $\eta^j = 1$ for all buyers $j$ to ensure that these transfers of wealth are lossless from a utility perspective, i.e. the buyers and sources value the payment equally. This assumption allows us to isolate the social loss due to the mechanism, and ignore any losses due to differential preferences in payment currency.

\begin{theorem}
\label{th:always_lose}
Suppose that Assumptions~\ref{ass:sigma_form} and~\ref{ass:separable} hold. Further, assume that $\gamma_i^j> 0$, for all $i \in \mathcal{S}$ and $ j \in \mathcal{B}$, and that $\xi_{i,l}^j > 0$ for some $i, l \in \mathcal{S}, \ j \in \mathcal{B}$. Finally, suppose GNE solutions exist for the game, and let $\vec{e^*}$ denote the unique level of effort exerted by the data sources across each of these GNE solutions, as stipulated by Theorem~\ref{thm:nash}. Then, there exists $\vec{\hat{e}} \in \R^N$ such that $\mathcal{L}\left(\vec{\hat{e}}\right)< \mathcal{L}\left(\vec{e^*}\right)$. Furthermore, the socially optimal levels of effort, $\vec{\hat{e}}$, are always less than the induced levels of effort at equilibrium $\vec{e^*}$.
\end{theorem}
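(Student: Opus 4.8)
The plan is to reduce the social loss $\mathcal{L}$ to a closed-form, separable function of the effort vector, compute its minimizer explicitly, and then compare that minimizer against the equilibrium effort characterized in Theorem~\ref{thm:nash}. First I would apply Assumption~\ref{ass:separable} to every expected squared error appearing in $\mathcal{L}$, writing $\E[(\hat f^j_{\vec x,\vec y^j}(x^*)-f(x^*))^2]=\sum_{i\in\Sc}\beta_i^j\sigma_i^2(e_i)$ exactly as in the derivation of the buyer objectives. Substituting into the definition of $\mathcal{L}$ and collecting the coefficient multiplying each $\sigma_i^2(e_i)$, the competition terms recombine into the same aggregate $\gamma_i^j=\beta_i^j-\sum_{k\in -j}\delta_k^j\beta_i^k$ that appears in the buyer problems, giving
\[
\mathcal{L}(\vec e)=\sum_{i\in\Sc}\Gamma_i\,\sigma_i^2(e_i)+\sum_{i\in\Sc}e_i,\qquad \Gamma_i:=\sum_{j\in\Bc}\gamma_i^j.
\]
Since $\gamma_i^j>0$ by hypothesis, each $\Gamma_i>0$, and by Assumption~\ref{ass:sigma_form} we have $\sigma_i^2(e_i)=\exp(-2\alpha_i e_i)$, so $\mathcal{L}$ is a sum of independent, strictly convex functions of the individual efforts.

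Because $\mathcal{L}$ is separable with each summand $\Gamma_i\exp(-2\alpha_i e_i)+e_i$ strictly convex and possessing a unique interior minimizer, the socially optimal effort is obtained coordinatewise from the first-order condition, yielding $\hat e_i=\frac{\ln(2\alpha_i\Gamma_i)}{2\alpha_i}$. To compare, I would recall from \eqref{eq:specific_effort} that the equilibrium effort of source $i$ is $e_i^*=\frac{\ln(2\alpha_i d_i^{total})}{2\alpha_i}$, and then sum the equilibrium characterization \eqref{eq:optimal_d} over all buyers $k\in\Bc$. Since $\sum_k\gamma_i^k=\Gamma_i$ and every remaining summand $d_l^j\xi_{l,i}^j$ (with $l\neq i$) is non-negative, this produces $d_i^{total}=\Gamma_i+(\text{non-negative cross terms})\geq\Gamma_i$ for every $i$.

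Monotonicity of the logarithm together with $d_i^{total}\geq\Gamma_i$ then gives $e_i^*\geq\hat e_i$ for all $i$, which is precisely the claim that the socially optimal effort never exceeds the induced equilibrium effort. For the strict loss gap, I would observe that at any GNE each $d_i^k\geq\gamma_i^k>0$, so every cross-term factor $d_l^j$ is strictly positive; combined with the hypothesis that $\xi_{i,l}^j>0$ for some off-diagonal triple, at least one source $q$ satisfies $d_q^{total}>\Gamma_q$ and hence $e_q^*>\hat e_q$. Since $\vec{\hat e}$ is the unique minimizer of the strictly convex, separable $\mathcal{L}$ and $\vec e^*\neq\vec{\hat e}$, it follows that $\mathcal{L}(\vec{\hat e})<\mathcal{L}(\vec e^*)$, as desired.

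The routine parts are the single-variable convex optimization and the monotonicity comparison. The step demanding care is the alignment between the first and third paragraphs: I must verify that the coefficient $\Gamma_i=\sum_j\gamma_i^j$ obtained by collapsing the social loss is \emph{exactly} the constant term that appears when \eqref{eq:optimal_d} is summed over buyers. This bookkeeping — ensuring the competition weights $\delta_k^j$ and the coefficients $\beta_i^j$ enter identically in the two computations, so that the aggregate $\Gamma_i$ is common to both — is the crux, because it is what makes the clean inequality $d_i^{total}\geq\Gamma_i$, and therefore over-provisioning of effort at equilibrium, fall out directly.
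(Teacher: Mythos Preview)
Your proposal is correct and follows essentially the same approach as the paper's own proof: reduce $\mathcal{L}$ to the separable form $\sum_i\Gamma_i\sigma_i^2(e_i)+\sum_i e_i$ with $\Gamma_i=\sum_j\gamma_i^j$, solve the resulting one-dimensional convex problems to obtain $\hat e_i=\frac{\ln(2\alpha_i\Gamma_i)}{2\alpha_i}$, and then use the equilibrium characterization \eqref{eq:optimal_d} together with $d_l^j\geq\gamma_l^j>0$ and the off-diagonal $\xi$ hypothesis to conclude $d_q^{total}>\Gamma_q$ for some $q$, hence $e_q^*>\hat e_q$ and $\mathcal{L}(\vec{\hat e})<\mathcal{L}(\vec{e^*})$. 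The only cosmetic difference is that you sum \eqref{eq:optimal_d} over $k$ to compare $d_i^{total}$ with $\Gamma_i$ directly, whereas the paper compares $d_i^k$ with $\gamma_i^k$ termwise and then sums; the content is identical.
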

\begin{proof}
We begin by calculating solving for the value of ${\vec{\hat e}}$ which minimizes the value of $\mathcal{L}\left(\vec{\hat{e}}\right)$. Invoking Assumption~\ref{ass:separable} and our definition of $\gamma$, we may write:
\begin{equation}
\mathcal{L}\left(\vec{\hat{e}}\right) = \sum_{i \in \mathcal{S} }\sum_{j \in \mathcal{B}} \gamma_i^j \sigma_i^2(\hat{e_i}) + \sum_{i \in \mathcal{S}} \hat{e}_i  
\end{equation}

Taking the derivative with respect to $\hat{e_q}$ and repeating our analysis with Assumption~\ref{ass:sigma_form}, and setting the resulting equation to zero we obtain:
\[
-2 \alpha_q \left( \sum_{j \in \mathcal{B}}\gamma_q^j \right) \exp(-2\alpha_q \hat{e}_q)  + 1 = 0
\]
We can re-arrange this to yield:
\begin{equation}
\label{eq:optimal_effort_social}
\hat{e}_q =  \frac{\ln(2\alpha_q \sum_{j \in \mathcal{B}} \gamma_q^j)}{2\alpha_q}
\end{equation}

Note, that $\mathcal{L}$ is strictly convex with respect to $\hat e_q$, so choosing the entries of $\hat{\vec{e}}$ according to equation \eqref{eq:optimal_effort_social} must produce the unique minimizer of $\mathcal{L}\left(\vec{\hat{e}}\right)$.

Next we compare this to the level of effort the sources produce in the GNE of the game between the buyers. By \eqref{eq:optimal_d}, we obtain that in the GNE for all $i \in \Sc$ and $k \in \Bc$:
\[
d_i^k = \gamma_i^k + \sum_{j \in -k} \sum_{l \in -i} d_l^j \xi_{l,i}^j \geq \gamma_i^k > 0
\]
Furthermore, since there exists at least one $\xi > 0$, we know that $d_q^k > \gamma_q^k$ for some data source $q$ and buyer $k$. It follows that, for this particular $q$:
\[
\sum_{j \in \Bc} d_q^j > \sum_{j \in \Bc} \gamma_q^j
\]
Thus, by \eqref{eq:specific_effort}, we see that
\begin{equation}
e_q^* =  \frac{\ln(2\alpha_q \sum_{j \in \mathcal{B}} d_q^j)}{2\alpha_q} > \frac{\ln(2\alpha_q \sum_{j \in \mathcal{B}} \gamma_q^j)}{2\alpha_q} = \hat{e}_q.
\end{equation}

Thus the theorem is proved, since it must be the case that $\mathcal{L}\left(\vec{\hat{e}}\right)< \mathcal{L}\left(\vec{e^*}\right)$ since we chose $\vec{\hat{e}}$ to be the unique minimizer of $\mathcal{L}$.
\end{proof}

Theorem~\ref{th:always_lose} shows that there is always some social loss, ex-ante, from a Nash solution compared to the social optimum. Furthermore, the proof provides a way to identify where this loss is incurred, and how to calculate how much is lost. Note that the social welfare is always lost because the effort induced in equilibrium is higher than is socially optimal. This captures the intuition that each data buyer has a negative externality: they wish to improve their estimates without considering how their improved estimates hurt other data buyers.

The proof itself also provides strong intuition on the $\xi$ parameters between buyers. Loosely speaking, these $\xi$ parameters can be thought of as a measure of each buyer's `market power', in the sense that it quantifies how much one buyer can influence the payment contracts of other buyers in the data market to his advantage. 
As an extremal case, when $\xi_{i,l}^j = 0$ for all $i,l \in \Sc$ and $j \in \Bc$, there is no coupling between the payments the buyers make, and the social optimum coincides with the Nash solution.

\section{Example: Between two firms}
\label{sec:datam_examples}

In this section, we present an example which demonstrates how a data market may
collapse as the parameters of the system are varied. This example will  also
demonstrate how the efficiency of the data market, in terms of the ex-ante social loss
function $\mathcal{L}$, changes as the market approaches this collapse. In
particular, we consider the case where there are two data sources ($s_1$ and
$s_2$) and two firms acting as data buyers ($b_1$ and $b_2$). Each of the data sources is capable of estimating the function 
\begin{equation}
f \colon [-1, \ 1] \to \R.
\end{equation}

Let $x_1$, $x_2 \in [-1, 1]$ denote the locations where $s_1$ and $s_2$ sample $f$, respectively. Assume that each of the data sources are as defined in Assumption~\ref{ass:sigma_form}, with the characteristic parameters $\alpha_1 = \alpha_2 = 1$.

Next, we assume that each of the data buyers is performing linear regression on $f$, using the samples reported by the data source. In this case~\cite{cai:2015aa}:
\[
g_j(\vec{x}, F_j, \sigma^2(\vec{e})) =
\E_{x^* \sim F_j} \Bigg[ 
\begin{bmatrix}
x^* \\
1
\end{bmatrix}^\T (X^\T X)^{-1} X^\T \cdot 
\]
\[
\diag(\sigma_1^2(e_1), \sigma_2^2(e_2)) \cdot 
X (X^\T X)^{-1} 
\begin{bmatrix}
x^* \\
1
\end{bmatrix}
\Bigg] =
\]
\[
\gamma_1^j \sigma_1^2(e_1) + \gamma_2^j\sigma_2^2(e_2)
\]
In this example, we assume $F_1 = F_2$ as the uniform distribution on the domain of $f$, $[-1,1]$. Thus, for $i \in \{1,2\}$:
\[
\gamma_i^1 = \gamma_i^2 = \frac{(x_1 - x_2)^2/3 + (x_i^2 - x_1x_2)^2}{(x_1^2+x_2^2-2x_1x_2)^2}
\] 
Note that, by these assumptions, $g_1 = g_2$, and furthermore:
\[
\xi_{1,2}^1 = \xi_{1,2}^2 = g(x_2, \delta_{x_1}, \sigma_2^2(e_2)) = \frac{(x_1x_2+1)^2}{(x_2^2+1)^2}
\]
\[
\xi_{2,1}^1 = \xi_{2,1}^2 = g(x_1, \delta_{x_2}, \sigma_1^2(e_1)) = \frac{(x_1x_2+1)^2}{(x_1^2+1)^2}
\]
For illustrative purposes, we fix $x_2 = 1$, and see what happens to the data market as we vary $x_1$ along the interval $[-1,1]$. 

Note that, when $x_1 = x_2 = 1$, it is no longer possible to construct a linear estimator of $f$ because there is insufficient data. Thus, the example shows how the game between buyers behaves as it becomes increasingly difficult to construct good estimators. The $\vec{d}$ parameters of any Nash solution can be found by solving:
\begin{equation}
\label{eq:matrix_eq}
\begin{bmatrix}
\gamma_1^1 \\
\gamma_1^2 \\
\gamma_2^1 \\
\gamma_2^2
\end{bmatrix} =
\underbrace{
\begin{bmatrix}
1 & 0 & 0 & -\xi_{2,1}^2 \\
0 & 1 & -\xi_{2,1}^1 & 0 \\
0 & -\xi_{1,2}^2 & 1 & 0 \\
-\xi_{1,2}^1 & 0 & 0 & 1
\end{bmatrix}
}_{B}
\begin{bmatrix}
d_1^1 \\
d_1^2 \\
d_2^1 \\
d_2^2
\end{bmatrix}
\end{equation}
Note that this $B$ matrix is equal to $I - A$ as defined in the proof. We numerically solve this system of equations for varying values of $x_1 \in [-1,1]$, and the results are shown in Figures~\ref{fig:gamma_params} through~\ref{fig:dm_loss}.

Figures~\ref{fig:gamma_params} and~\ref{fig:xi_params}, demonstrate how the
$\gamma$ and $\xi$ parameters of the game change as a function of $x_1$. Figure~\ref{fig:d_params} demonstrates the d parameters that the buyers will offer the data sources as  $x_1$ varies. And finally, Figure \label{fig:loss} demonstrates
the price of anarchy in the data market, as a function of $x_1$ which is given
by:
\[
\frac{\mathcal{L}\left(\vec{e^*}\right)}{\mathcal{L}\left(\vec{\hat{e}}\right)}
\]
Here, $\vec{e}^*$ is the induced effort of the sensors in the Nash solution of the game between data buyers, and $\vec{\hat{e}}$ is the socially optimal effort for data sources to exert. Further comments in the captions of Figures~\ref{fig:d_params} and~\ref{fig:dm_loss} demonstrate the inefficiencies that arise in this example.

\begin{figure}[h]\centering
\includegraphics[width=0.35\textwidth]{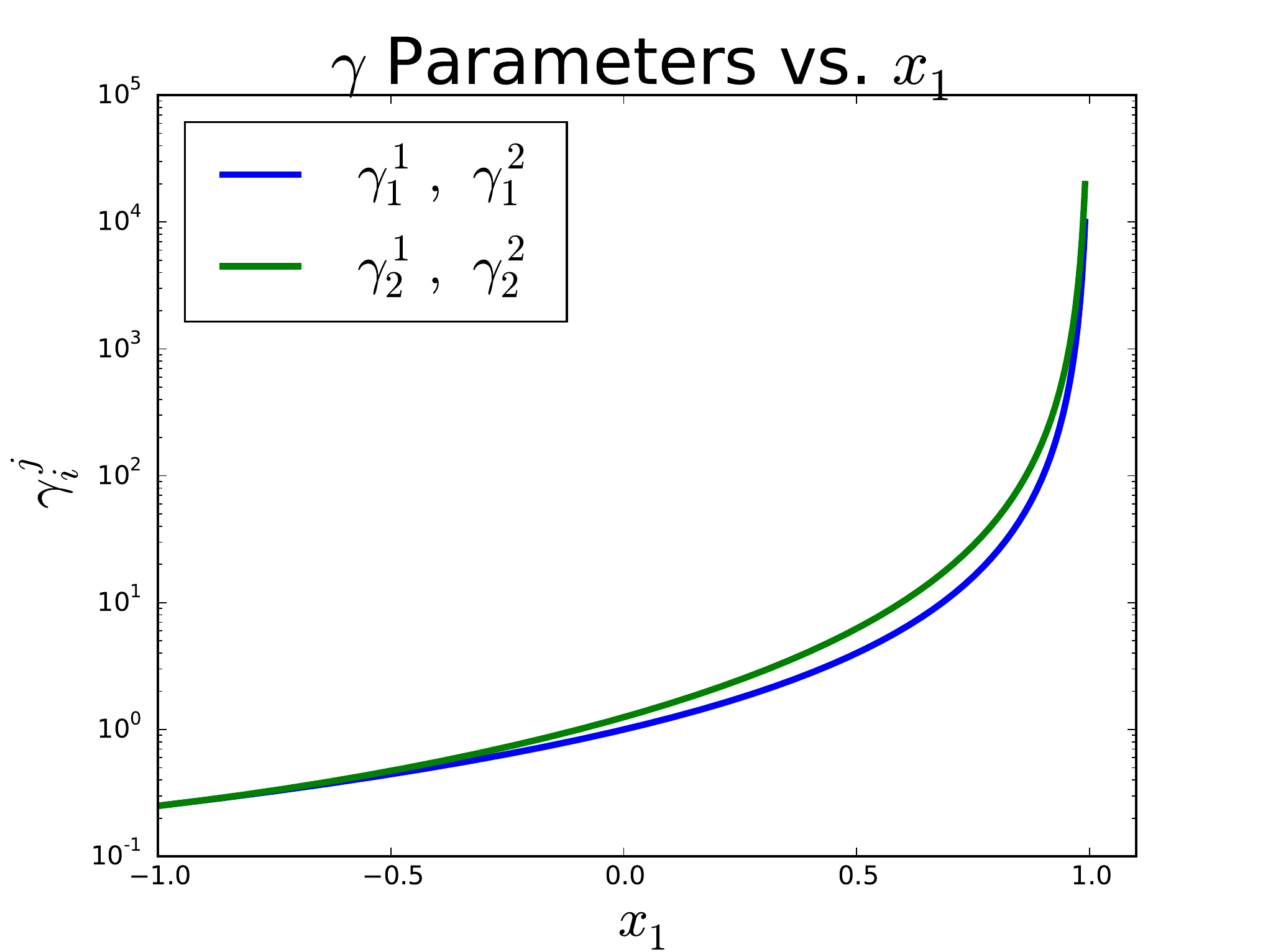}
\caption{This figure depicts how the various $\gamma$ parameters of the system
vary as a function of $x_1$. Note that as $x_1 \to 1$, $\gamma$ diverges to
infinity, which reflects the fact that as $x_1$ and $x_2$ become increasingly
close it becomes more difficult to generate a linear estimator from samples at
these data points. }\label{fig:gamma_params}
\end{figure}
\begin{figure}[h]\centering
\includegraphics[width=0.35\textwidth]{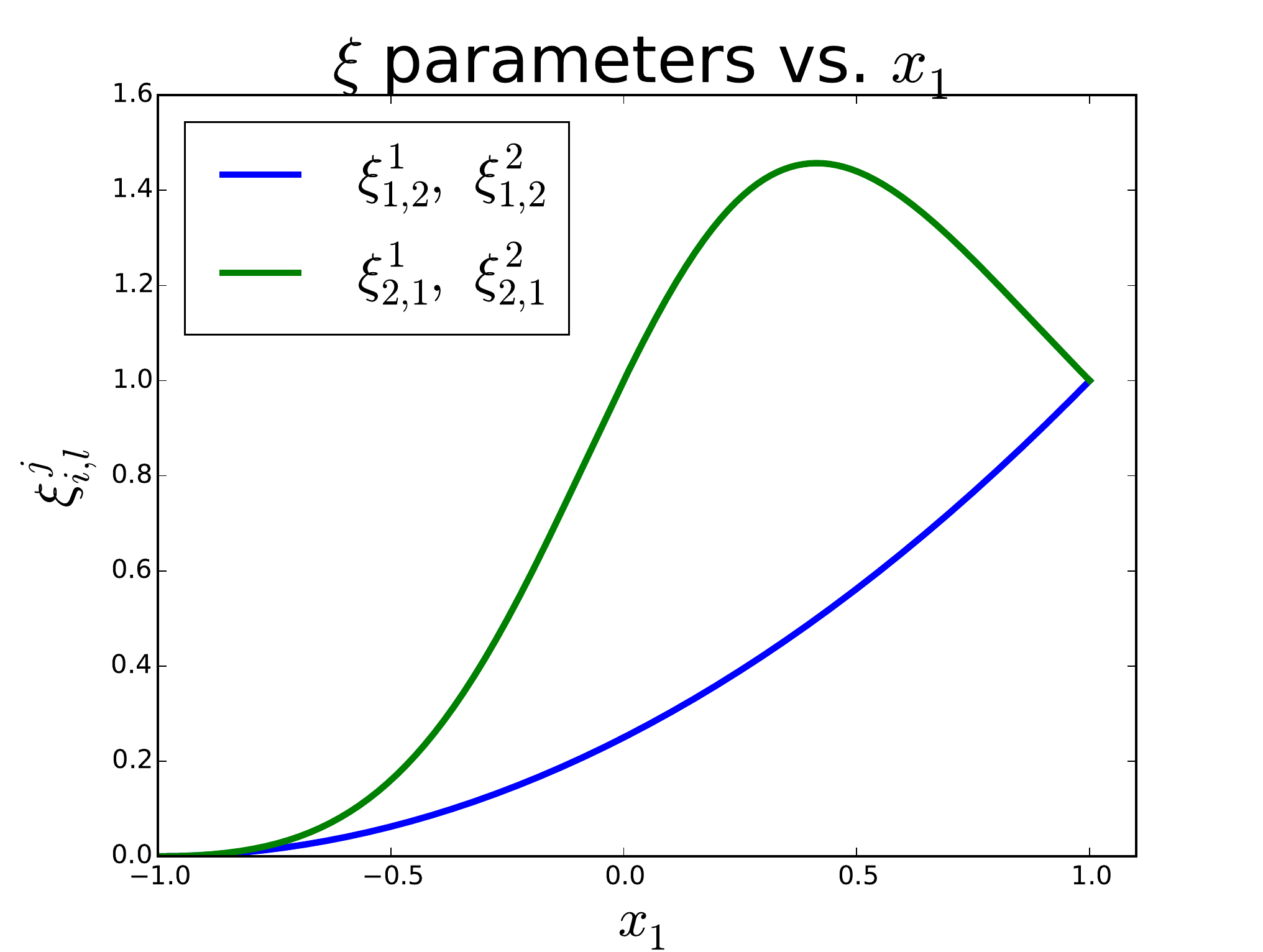}
\caption{This figure depicts how the various $\xi$ parameters vary as a function
    of $x_1$. As all of the $\xi$ parameters converge to a value of $1$ as $x_1
    \to 1$, the matrix $B$ in equation \eqref{eq:matrix_eq} becomes singular,
    causing the breakdown of solutions for the $d$ parameters, as is depicted in
    Figure~\ref{fig:d_params}. }\label{fig:xi_params}
\end{figure}
\begin{figure}[h]\centering
\includegraphics[width=0.35\textwidth]{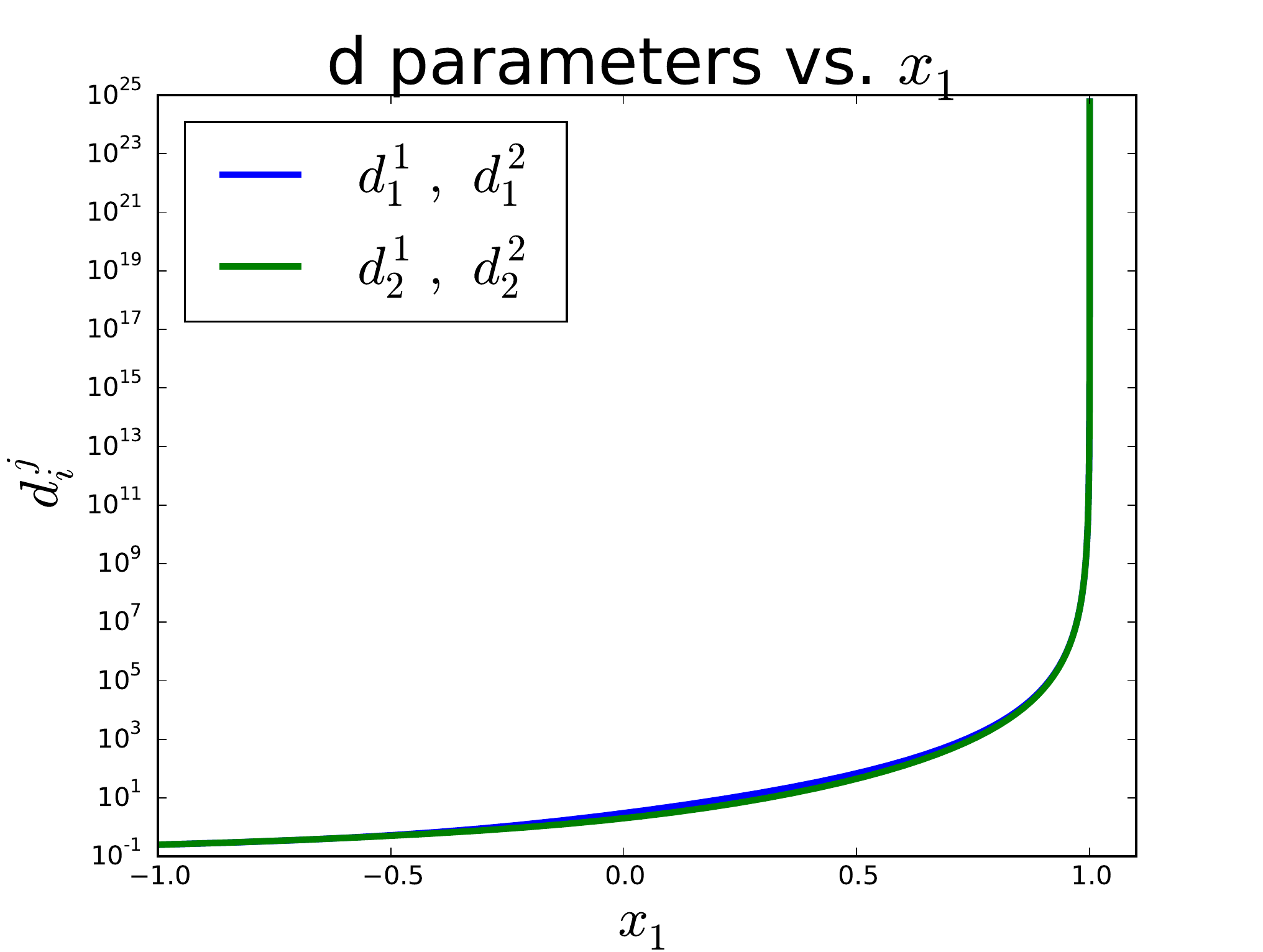}
\caption{This figure depicts the Nash equilibrium $\vec{d}$ parameters for the game
    between the buyers as a function of $x_1$. Note that, as $x_1 \to 1$, the
    $\vec{}d$
    parameters go off to infinity, and the Nash equilibria between the buyers
    breaks down. Comparing these results to Figure~\ref{fig:gamma_params}, we
see that the $\vec{d}$ parameters diverge much more quickly than the $\gamma$
parameters, meaning that in the Nash equilibria to the game between the two buyers becomes increasingly inefficient as $x_1 \to 1$. }
\label{fig:d_params}
\end{figure}
\begin{figure}[h]
    \centering
\includegraphics[width=0.35\textwidth]{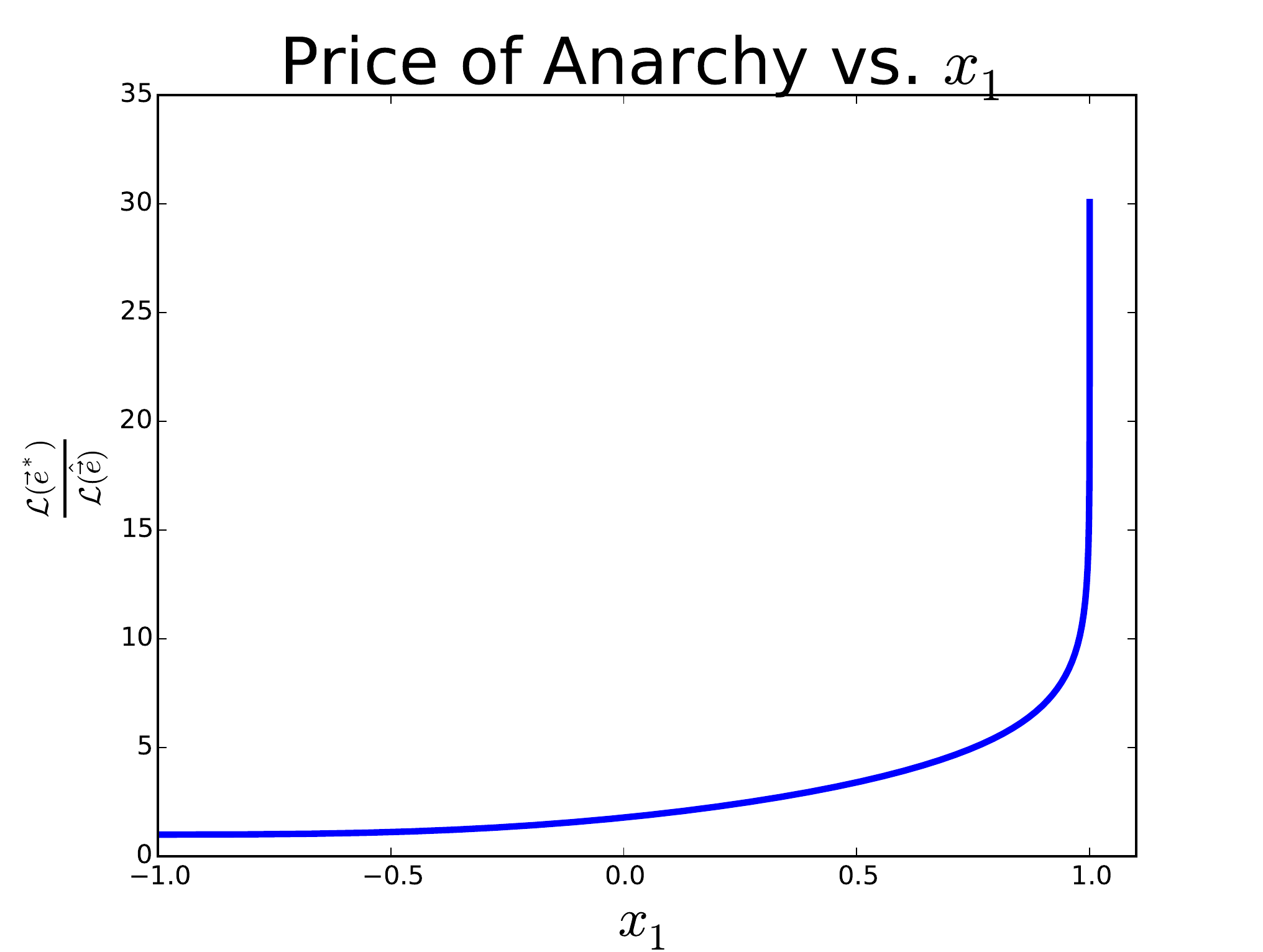}
    \caption{This figure depicts the price of anarchy for the marketplace as a function of $x_1$. When $x_1=-1$, the payments in the marketplace are decoupled and the $\xi$ parameters  are all zero; in this instance the price of anarchy is 1, and the market is perfectly efficient. However, as $x_1 \to 1$, the price of anarchy diverges asymptotically to infinity, and the marketplace becomes increasingly inefficient as it becomes more difficult for the buyers to construct the estimators they desire. }
    \label{fig:dm_loss}
\end{figure}

\section{Closing remarks}
\label{sec:datam_discussion}

We've analyzed the game that forms between a set of data buyers when they wish to communally incentivize a collection of strategic data sources, using a mechanism that has been proposed in the literature. We derived, for a particular form of the game, conditions for the existence of GNE, and demonstrated that these solutions are frequently socially inefficient. This motivates future work to develop a richer class of incentive mechanisms which alleviate these issues. Possible solutions include more complex pricing mechanisms, or perhaps the addition of a trusted third party market-maker to mediate socially beneficial transactions in these data markets.

\bibliographystyle{IEEEtran}
\bibliography{DONG_ROY_refs}

\end{document}